\newtheorem{theorem}{Theorem}
\newtheorem{rem}{Remark}
\lstdefinestyle{r-output}{
style = r-style,
style = r-output-user,
}
\newcommand{\E}{\mathbf{E}}
\newcommand{\Probb}{\mathbf{P}}
\title{To impute or not to? Testing multivariate normality on incomplete dataset: Revisiting the BHEP test}
\author{ \href{https://orcid.org/0000-0002-0460-400X}{\includegraphics[scale=0.06]{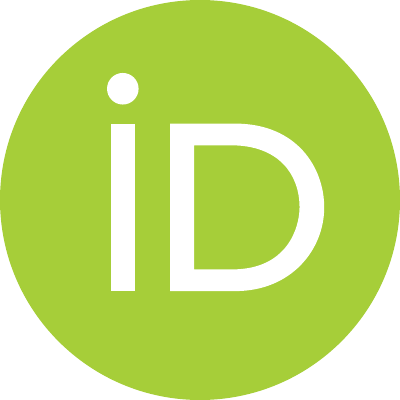}\hspace{1mm} Danijel G.    Aleksi\' c} \\
	University of Belgrade\\
	Faculty of Organizational Sciences, Faculty of Mathematics\\
	Belgrade, 11000, Serbia \\
	\texttt{danijel.aleksic@fon.bg.ac.rs} \\
	\And
	\href{https://orcid.org/0000-0001-8243-9794}{\includegraphics[scale=0.06]{orcid.pdf}\hspace{1mm}Bojana Milo\v sevi\' c} \\
	University of Belgrade\\
        Faculty of Mathematics\\
	Belgrade, 11000, Serbia \\
	\texttt{bojana.milosevic@matf.bg.ac.rs} \\
}
\date{}
\begin{document}
\maketitle

\begin{abstract}
    In this paper, we focus on testing multivariate normality using the BHEP test with data that are missing completely at random. Our objective is twofold: first, to gain insight into the asymptotic behavior of BHEP test statistics under two widely used approaches for handling missing data, namely complete-case analysis and imputation, and second, to compare the power performance of test statistic under these approaches.
   It is observed that under the imputation approach, the affine invariance of test statistics is not preserved. To address this issue, we propose an appropriate bootstrap algorithm for approximating p-values. Extensive simulation studies demonstrate that both mean and median approaches exhibit greater power compared to testing with complete-case analysis, and open some questions for further research. 
\end{abstract}

\section{Introduction}

Addressing missing data is very important in statistical analysis, as the presence of missing data directly impacts the validity and reliability of the conclusions drawn from the data. Failure to account for missing data can introduce bias, distort results, and undermine the accuracy of statistical methods. This problem has been studied from various perspectives. However, in the context of goodness-of-fit testing, the literature is very sparse.

On the other hand, testing multivariate normality is a crucial aspect of statistics as it provides a foundation for many statistical techniques and assumptions. In multivariate data analysis, researchers often assume that the data follows a multivariate normal distribution. Deviations from multivariate normality (MVN) can impact the validity of various statistical methods, such as multivariate analysis of variance, some properties of coefficients of linear regression, etc. Identifying departures from MVN allows researchers to make informed decisions about the appropriateness of chosen statistical methods and can guide the selection of alternative techniques if necessary. Ensuring MVN is also  important in fields like finance, biology, and social sciences, where accurate modeling of data distributions is essential for having valid inference and 
quality decisions based on statistical analyses. 
For this reason, many tests for testing MVN have been proposed so far. A thorough overview of them can be found in \cite{EbnerHenze2020}, while for some more recent results we refer to \cite{ebner2022testing}, \cite{gonzalez2022shapiro}, \cite{ejsmont2023test}.

Only a handful of tests were proposed for testing MVN on an incomplete sample, and they are based on the skewness and kurtosis estimates. Yamada et al. proposed, in \cite{yamada2015kurtosis}, a test for MVN based on a generalization of Mardia's statistic for measuring kurtosis. The test was suitable for two-step monotone incomplete data. Another such test is given by Kurita et al. in \cite{kurita2022multivariate}. Tan et al. in \cite{tan2005testing} proposed a test for multiple samples of possibly small sizes, that was, again, based on estimating kurtosis and skewness, utilizing multiple imputation and Gibbs sampler. 
To the best of our knowledge, MVN testing was not studied in the context of $L^2$-weighted test statistics, such as the statistic of the BHEP test. We aim to fill that gap.



The paper is structured as follows.  In Section \ref{sec:prerequisites} we make a brief review of BHEP test and its properties that are essential for studying the test in the presence of missing data, which is one of our aims. Then, in Section \ref{sec:miss} we consider behavior of test statistics under the complete-case approach. Some insights into tests' behavior under the imputation approach are also given in this section. To overcome the problem of dependence on the unknown parameters of the test statistic calculated on the imputed dataset, a bootstrap algorithm is proposed. Section \ref{sec:simulation}  contains results of an extensive simulation study that was conducted to compare Type I error preservation and power behavior of the test statistic under the complete-case approach, as well as some of the most widely used imputation methods, utilizing the proposed bootstrap algorithm.

Throughout the paper, all limits are taken when $n\to \infty$, where $n$ denotes the sample size; $\overset{D}{\to}$ denotes the convergence in distribution. In addition, $I_d$ denotes the $d\times d$ identity matrix. 

\section{BHEP: the prerequisites}\label{sec:prerequisites}

Here we focus on BHEP test which is one of the most well-known tests for testing MVN assumption and review its properties in the absence of missing data. For more details, we refer to  \cite{BaringhausHenze1988} and to \cite{HenzeWagner1997} for the generalization of the test.
All of the results stated here are known and will be utilized afterwards. 

Let $X_1, X_2, \dots, X_n$ be a sample of $n$ independent and identically distributed (IID) $d$-dimensional random column vectors. We are interested in testing the assumption that $X_1$ has some $d$-dimensional non-singular normal distribution, i.e. the hypothesis
\begin{align*} 
H_d \; : \; \text{the law of }X_1 \text{ is } \mathcal{N}_d (\mu, \Sigma), 
\end{align*}
for some $\mu \in \mathbb{R}^d$ and some non-singular covariance matrix $\Sigma$, where $\mathcal{N}_d (\mu, \Sigma)$ denotes the $d$-dimensional normal distribution with mean $\mu$ and covariance matrix $\Sigma$. 

Let 
\begin{align*} 
S_n = \frac{1}{n} \sum_{j=1}^n (X_j - \Bar{X}_n)(X_j - \Bar{X}_n)' 
\end{align*}
be a sample covariance matrix of a given sample, and let
\begin{align*} 
Y_j = S_n^{-\frac{1}{2}} (X_j - \Bar{X}_n), \quad j = 1, 2, \dots, n.
\end{align*}
If we denote by
\[ \psi_n(t) = \frac{1}{n} \sum_{j=1}^n \exp \left( it'Y_j \right), \quad t \in \mathbb{R}^d  \]
the empirical characteristic function of $Y_1, \dots, Y_n$, the test statistic of a BHEP test is given as
\begin{align}\label{Tn}
T_n = n \int_{\mathbb{R}^d} \Big| \psi_n(t) - \exp\big(-\frac12 \|t\|^2 \big)  \Big|^2 \varphi (t) \, \mathrm{d} t,
\end{align}
where $\varphi (t) = (2\pi)^{-d/2} \exp (- \frac12 |t|^2)$.

Baringhaus and Henze were successful in proving that $T_n$ can be represented as a weakly-degenerate $V$-statistic with estimated parameters. More precisely, they have shown that
\begin{align*}
T_n = n V_n (\lambda_n),
\end{align*}
where $\lambda = (\mu, \Sigma)$, $\lambda_n = (\Bar{X}_n, S_n)$, $V_n (\lambda) = n^{-2} \sum_{j,k=1}^n h(X_j, X_k; \lambda)$ and the closed-form expression for the kernel $h$ is given as
\begin{align}
    h(x_1, x_2; \lambda) &= \exp \Big( -\frac12 (x_1 - x_2)' \Sigma^{-1} (x_1 - x_2) \Big) \nonumber \\
    &- 2^{-d/2} \exp \Big( -\frac14 (x_1 - \mu)' \Sigma^{-1} (x_1 - \mu) \Big) \nonumber \\
    & - 2^{-d/2} \exp \Big( -\frac14 (x_2 - \mu)' \Sigma^{-1} (x_2 - \mu) \Big) + 3^{-d/2}, \label{h}
\end{align}
or, as they have shown
\begin{align}\label{h_preko_g}
    h(x_1, x_2; \lambda) = \int_{\mathbb{R}^d} g(x_1, t; \lambda) g(x_2, t; \lambda) \varphi (t) \, \mathrm{d} t,
\end{align}
where
\begin{align*}
    g(x,t; \lambda) = \cos \big(  t' \Sigma^{-1/2} (x - \mu) \big) + \sin  \big(  t' \Sigma^{-1/2} (x - \mu) \big) - \exp \big( -\frac12 \| t\|^2 \big).
\end{align*}
The authors then used existing theory about weakly-degenerate $U$- and $V$-statistics with estimated parameters that was developed by De Wet and Randles in \cite{deWetRandles1987} to derive the asymptotic distribution of the $T_n$ and to prove some further properties of the test. Specifically, the authors found that 
\begin{align}\label{convergence_original}
T_n \overset{D}{\to} \sum_{k=1}^{+ \infty} \kappa_k \chi_{1,k}^2,
\end{align}
where $\chi_{1,k}^2$ are IID $\chi_1^2$-distributed random variables, and $\kappa_k$-s are eigenvalues of the operator $A$ defined as
\begin{align}\label{A_operator}
    Af(x) = \int_{\mathbb{R}^d} h_* (x,y) f(y) \varphi (y) \, \mathrm{d} y.
\end{align}
The closed form of the kernel $h_*$ is derived in \cite{BaringhausHenze1988}, while the problem of  deriving eigenvalues is recently studied in \cite{EbnerHenze2023}.




Although the asymptotic distribution \eqref{convergence_original} can approximated, in practice, the null distribution is usually approximated using Monte Carlo simulation.
 Since the $T_n$ is affine invariant, one can safely assume that the null distribution is $d$-variate standard normal, and simulate the null distribution for the desired sample size. The empirical quantiles are then used to construct the rejection region of the test.

\section{The BHEP test: challenges of incomplete dataset}\label{sec:miss}

Now assume that we have the sample $X_1, \dots, X_n$ of IID $d$-dimensional random column vectors and that some of data are missing. Let us introduce,  for every $j = 1,2, \dots, n$ and $k = 1,2, \dots, d$, the \textit{response indicator} for $X_{jk}$:
\begin{align*}
    R_{jk} = 
    \begin{cases}
        1, & \text{if } X_{jk} \text{ is observed}, \\
        0, & \text{otherwise},
    \end{cases}
\end{align*}
where $X_{jk}$ is $k$-th component of $X_j$. Denote $R_j = (R_{j1}, \dots, R_{jd})'$. We assume that $R_j$s are mutually independent (which does not need to follow from the independence of $X_j$s), and denote $q = (q_1, \dots, q_d)' = \E (R_1)'$. Finally, we use $\odot$ to denote the Hadamard-Schur componentwise multiplication of vectors. 

We say that data are missing completely at random (MCAR) if the distribution of the response indicator does not depend neither on the observed, nor on the missing data. Alternatively, when response indicator depends only on the observed, but not the missing data, we say that data are missing at random (MAR). Otherwise, data are missing not at random (MNAR). The definitions are due to Little and Rubin, and detailed explanations can be found in \cite{RubinLittle1987}.
Here we focus our attention on the case when the data are MCAR.

In what follows we consider an expanded sample
\begin{align*}
    (X_1', R_1')', \dots, (X_n', R_n')',
\end{align*}
which is suitable for expressing the test statistic in the presence of missing data.

\subsection{Complete-case approach }

The complete-case approach,i.e. when every observation that is not completely observed is removed from the sample, is very common in practice. It is mostly used when data are MCAR and the missingness rate is not very high, since most estimates, such as sample mean, remain unbiased and consistent. Intuitively, one can expect that a test statistic will preserve properties under MCAR and complete-case, since we are working with the "representative" subset of the original sample. The claim remains true for the test statistic of BHEP test. Although intuitive, it deserves rigorous proof. Furthermore, the proof of the similar result for non-degenerate $U$-statistics was anything but trivial, as seen in \cite{aleksic2023etAl}. 
\begin{theorem}\label{teorema_cc}
    Let the null hypothesis and the MCAR assumption hold, and let $T_n$ be as in \eqref{Tn}. Let $\hat{T}_n$ be the same statistic, but calculated on the completely observed sample units. Then, it holds that $nT_n$ and $\hat{n}\hat{T}_n$ have the same asymptotic distribution, where $\hat{n}$ is the number of complete cases.
\end{theorem}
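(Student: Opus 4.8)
What must be shown is that $\hat n\hat T_n$, the complete-case counterpart of the statistic in the theorem, converges in distribution to $\sum_{k\ge 1}\kappa_k\chi^2_{1,k}$, i.e.\ to the limit that \eqref{convergence_original} provides for $T_n$ on the full sample. The plan is to use the MCAR assumption to reduce, by conditioning on the pattern of missingness, to the already-known fixed-sample-size asymptotics of Baringhaus and Henze, so that the only genuinely new point is the passage to the random sample size $\hat n$.

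First I would isolate the structural consequence of MCAR: the response indicators $(R_j)_{j\ge1}$ are independent of the data $(X_j)_{j\ge1}$ jointly, as sequences (this is exactly where MCAR, rather than merely MAR, is needed), and conditionally on $R_j=\mathbf 1$ the vector $X_j$ still has law $\mathcal N_d(\mu,\Sigma)$ under $H_d$. Hence the set of complete indices and its cardinality $\hat n$ are functions of $(R_j)_{j\ge1}$ alone, independent of the $X$-process, and conditionally on $(R_j)_{j\ge1}$ the complete cases form an i.i.d.\ sample of the \emph{deterministic} size $\hat n$ from $\mathcal N_d(\mu,\Sigma)$, with $\hat\lambda_{\hat n}=(\bar X_{\hat n},S_{\hat n})$ being precisely the estimated parameter entering the Baringhaus--Henze $V$-statistic representation on that subsample. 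Therefore the conditional law of $\hat n\hat T_n$ given $(R_j)_{j\ge1}$ coincides with the unconditional law of $T_m$ at $m=\hat n$; writing $F_m$ for this law — which by affine invariance does not depend on $(\mu,\Sigma)$ — we get $\Probb\big(\hat n\hat T_n\le x\mid (R_j)_{j\ge1}\big)=F_{\hat n}(x)$. The point that makes this more than a formality — presumably what the authors mean by recalling that the $U$-statistic analogue was delicate — is that one must verify that no interaction with the \emph{missing} coordinates survives in $\hat T_n$ or in $\hat\lambda_{\hat n}$, so that the De Wet--Randles machinery invoked in \cite{BaringhausHenze1988}, together with the eigenvalues $\kappa_k$ of the operator $A$ in \eqref{A_operator}, transfers to the subsample unchanged.

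Then I would deal with the random index. By \eqref{convergence_original}, $F_m(x)\to F(x):=\Probb\big(\sum_{k\ge1}\kappa_k\chi^2_{1,k}\le x\big)$ at every continuity point $x$ of $F$, and under MCAR the strong law gives $\hat n/n\to\Probb(R_1=\mathbf 1)>0$ a.s., hence $\hat n\to\infty$ a.s. Consequently $F_{\hat n}(x)\to F(x)$ almost surely, and since $0\le F_{\hat n}(x)\le 1$, bounded convergence yields
\begin{align*}
\Probb\big(\hat n\hat T_n\le x\big)=\E\big[F_{\hat n}(x)\big]\longrightarrow F(x)
\end{align*}
at every continuity point of $F$. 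Thus $\hat n\hat T_n\overset{D}{\to}\sum_{k\ge1}\kappa_k\chi^2_{1,k}$, which is the limit law of $T_n$ in \eqref{convergence_original}, proving the theorem. (One could instead invoke a random-index limit theorem of Anscombe type, but the exact conditional independence between $\hat n$ and the inner randomness makes the elementary argument cleaner.)

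The main obstacle is thus not a hard estimate but the rigour of the conditioning step: one must confirm that MCAR decouples $(R_j)$ from $(X_j)$ \emph{jointly}, so that the selected subsample is independent of the randomness generating the conditional law; that the complete-case statistic is exactly a BHEP statistic with its canonical estimated parameters, so that neither the form of the limit nor the eigenvalues $\kappa_k$ change; and that the move from a deterministic to a random sample size is justified, which the independence in the first point together with $\hat n\to\infty$ delivers via the bounded-convergence step above.
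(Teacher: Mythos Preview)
Your argument is correct and takes a genuinely different route from the paper. The paper does not condition on the missingness pattern; instead it embeds the complete-case statistic in the De~Wet--Randles framework for degenerate $V$-statistics with estimated parameters on the expanded i.i.d.\ sample $(X_j,W_j)$, $W_j=\prod_k R_{jk}$. Concretely, it introduces the modified kernels $\hat g((x,w),t;\lambda)=w\,g(x,t;\lambda)$ and $\hat h=w_xw_y\,h$, checks all the De~Wet--Randles regularity conditions (in particular the asymptotic linear expansion of the complete-case estimators $\hat\mu,\hat\Sigma$ via $\hat\alpha(x,w)=\tfrac{w}{q_\Pi}\alpha(x)$), obtains the limiting kernel $\hat h_*((x,w_x),(y,w_y))=w_xw_y\,h_*(x,y)$, and shows that the associated integral operator $B$ has eigenvalues $q_\Pi\kappa_k$ with eigenfunctions $w_xf(x)$; hence $n\hat T_n\overset{D}{\to} q_\Pi\sum_k\kappa_k\chi^2_{1,k}$, and a Slutsky step with $\hat n/n\to q_\Pi$ finishes.

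Your approach is shorter and more elementary: MCAR gives joint independence of $(R_j)_{j\ge1}$ and $(X_j)_{j\ge1}$, so conditioning on the pattern reduces to the fixed-sample BHEP limit at a deterministic size, and the random index $\hat n\to\infty$ is absorbed by bounded convergence. What the paper's route buys is (i) the explicit eigenvalue identity $\zeta_k=q_\Pi\kappa_k$ for the operator on the enlarged space, and (ii) a template---kernels $\hat g,\hat h$ on the expanded sample, verification of the De~Wet--Randles conditions---that is reused verbatim in the imputation subsection, where the missingness can no longer be conditioned away and one really must work with $(X_j,R_j)$ inside the $V$-statistic. Your shortcut is specific to complete-case under MCAR and would not extend to imputation or to MAR, but for the theorem as stated it is a clean and valid alternative.
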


\begin{proof}
    First of all, note that $q$ is not considered to be a distribution parameter here, but more as a nuisance parameter, that is some known number. The test statistic does not use its estimate in the sense of De Wet and Randles. 

    Secondly, note that $\hat{T}_n$ is also affine invariant with respect to transformations of $X_1, \dots, X_n$. It follows in the same manner as for the $T_n$. For more detail, one can consult Baringhaus and Henze (\cite{BaringhausHenze1988}, p. 3).
    
    Now, if we denote $W_j = \prod_{k=1}^d R_{jk}$ for every $j$, one can easily see that $\hat{n} = \sum_{j=1}^n W_j$, and that $\hat{n}/n \overset{P}{\to} q_{\Pi}$ under MCAR, where $q_\Pi$ denotes the product of the components of $q$. From this point onwards, we can consider that we work with the expanded sample
    \begin{align*}
    (X_1', W_1)', \dots, (X_n', W_n)'.
    \end{align*}
    If we introduce 
    \begin{align*}
        \hat{g} ((x, w), t; \lambda) = wg (x, t; \lambda) ,
    \end{align*}
    and 
    \begin{align*}
        \hat{h} ((x,w_x), (y, w_y) ; \lambda) &= \int_{\mathbb{R}^d} \hat{g} ((x, w_x), t; \lambda) \hat{g} ((y, w_y), t; \lambda) \varphi (t) \,\mathrm{d}t = w_x w_y h(x,y; \lambda),
    \end{align*}
    one can see that $\hat{\epsilon} (t; \lambda) := \E (\hat{g} ((X, W), t; \lambda)) = q_\Pi \E (g(X, t; \lambda)) = q_\Pi \epsilon (t; \lambda)$, where $\epsilon (t; \lambda)$ is as in \cite{BaringhausHenze1988}. Having these direct relations between kernels makes regularity conditions of De Wet and Randles follow trivially from those in \cite{BaringhausHenze1988}. The only one that requires attention is Condition (2.10), i.e. the expansion of parameter estimates, which we now present.
    
    The parameters here are estimated on the complete cases, being
    \begin{align*}
        \hat{\mu} = \frac{1}{\hat{n}} \sum_{j=1}^n W_j X_j = \frac{1}{n} \sum_{j=1}^n \frac{W_j}{q_\Pi}  X_j + o_P(1/\sqrt{n})
    \end{align*}
    and
    \begin{align*}
        \hat{\Sigma} = \frac{1}{\hat{n}} \sum_{j=1}^n W_j (X_j - \Bar{X}_j) (X_j - \Bar{X}_j)' = \frac{1}{n} \sum_{j=1}^n  \frac{W_j}{q_\Pi} (X_j - \Bar{X}_j) (X_j - \Bar{X}_j)' + o_P(1/\sqrt{n}).
    \end{align*}
    This allows us to use 
    \begin{align*}
        \hat{\alpha} (x,w) = \frac{w}{q_\Pi} \alpha (x)
    \end{align*}
    to express
    \begin{align*}
        (\hat{\mu}, \hat{\Sigma}) = (0, I_d) + \frac{1}{n} \sum_{j=1}^n \hat{\alpha} (X_j, W_j) + o_P(1/\sqrt{n}),
    \end{align*}
      where $\alpha(x)=(x,xx'-I_d)$ (see \cite{BaringhausHenze1988}).
    Having this, and the direct relation $\hat{\epsilon}_1 (t; 0, I_d) = q_\Pi \epsilon_1 (t; 0, I_d)$, where $\epsilon_1$ is as in \cite{BaringhausHenze1988},  we proceeed in the same manner as Baringhaus and Henze and obtain the
    kernel
    \begin{align}
        \hat{h}_* ((x,w_x), (y, w_y) ) &= \int_{\mathbb{R}^d} \left( \hat{g} ((x, w_x), t; 0, I_d) - \hat{\epsilon}_1 (t; 0, I_d) \hat{\alpha} (x, w_x)\right)\cdot \nonumber \\
        &\quad \quad \quad \quad \cdot \left( \hat{g} ((y, w_y), t; 0, I_d) - \hat{\epsilon}_1 (t; 0, I_d) \hat{\alpha} (y, w_y)\right) \varphi (t) \, \mathrm{d} t \nonumber \\ 
        &= w_x w_y h_*(x,y). \label{h_hat_star}
    \end{align}
    By De Wet and Randles, and similarly to Baringhaus and Henze, we have that
    \begin{align*}
        n \hat{T}_n \overset{D}{\to} \sum_{k=1}^{+ \infty} \zeta_k \chi_{1, k}^2,
    \end{align*}
    where $\zeta_k$ are the eigenvalues of an integral operator
    \begin{align*}
       Bg(x,w_x)  &= \sum_{w_y \in \{0,1\}} \left(\int_{\mathbb{R}^d} w_x w_y h_*(x,y) g(y, w_y) \varphi(y) \, \mathrm{d}y \right)\Probb \{ W = w_y \} \\
       &=  w_x q_{\Pi} \int_{\mathbb{R}^d} h_* (x,y) g(y, 1) \varphi (y) \, \mathrm{d} y.
    \end{align*}
    Now we see that $g(x,w_x)$ is an eigenfunction of $B$ if and only if $g(x,w_x)=w_xf(x)$, where $f(x)$ is an eigenfunction of $A$, where $A$ is from \eqref{A_operator}. Furthermore, every eigenvalue of $B$ is the eigenvalue of $A$ multiplied by $q_{\Pi}$. Then, it follows that
    \begin{align*}
        n\hat{T}_n \overset{D}{\to} q_\Pi \sum_{k=1}^{+ \infty} \kappa_k \chi_{1, k}^2, 
    \end{align*}
    where the right-hand side is as in \eqref{convergence_original}. Since $n/\hat{n} \overset{P}{\to} 1/q_\Pi$, Slytsky's theorem gives us that
    \begin{align*}
         \hat{n}\hat{T}_n \overset{D}{\to} \sum_{k=1}^{+ \infty} \kappa_k \chi_{1, k}^2, 
    \end{align*}
    which concludes the proof.
\end{proof}

\subsection{Imputation approach}

A very common approach when working with missing data is to impute the missing values using some imputation method. However, in practice, after the imputation, the analysis is conducted as if the data were complete. However, the imputation clearly changes the distribution of the data. Furthermore, in some papers, e.g. \cite{aleksic2023etAl}, the change in distribution that imputation produces has been exactly quantified. 

In this subsection, we discuss the BHEP test in the context of the imputed dataset, illustrating the methodology and problems that arise using the example of sample mean imputation. To be more formal, we replace every $X_{jk}$, $j=1,2, \dots, n$, $k = 1,2,\dots, d$, that is not observed, with adequate mean $\frac{1}{\sum_{j=1}^n R_{jk}} \sum_{j=1}^n X_{jk} R_{jk}$, calculated on available column units. Having results of De Wet and Randles (\cite{deWetRandles1987}), we first use components of the theoretical mean value $\mu = (\mu_1, \dots, \mu_d)'$ of the distribution as the imputed values, and then move to estimated ones.

By noting that instead of saying that data value is getting replaced with corresponding $\mu_i$ when missing, and remaining unchanged when observed, one can neatly write that every $X_j$ from the sample, $j = 1, \dots, n$, is being replaced with
\begin{align*}
    (X_j - \mu) \odot R_j + \mu.
\end{align*}
Now, the test statistic on the (sample) imputed dataset can be written as
\begin{align*}
    \Tilde{T}_n = n \Tilde{V}_n (\Tilde{\lambda}_n),
\end{align*}
where $\Tilde{\lambda}_n = (\Tilde{X}_n, \Tilde{S}_n)$ are parameters estimated from the incomplete data. The vector of means is estimated as mentioned above, i.e. $\Tilde{X}_n \frac{1}{\sum_{j=1}^n R_j} \sum_{j=1}^n X_j \odot R_j$,  where we use slight abuse of notation since the division is also componentwise, $R_j$ being a vector for every $j$. We choose to estimate the covariance matrix $\Sigma$ using only complete cases, i.e.
\begin{align}\label{sigma_tilda}
    \Tilde{\Sigma} = \frac{1}{\sum_{j=1}^n W_j} \sum_{j=1}^n W_j (X_j - \Bar{X}_n) (X_j - \Bar{X}_n)'.
\end{align}
The another choice would be to estimate $\Sigma$ as the sample covariance matrix calculated on the imputed dataset, but for that estimate it is difficult to verify the condition (2.10) of De Wet and Randles, whose results one may aim to utilize in this context. Furthermore, such estimate in many cases, e.g. mean imputation which we will study, is not consistent. 

Going further, one can see that the auxiliary $V$-statistic $\Tilde{V}_n (\lambda)$ 
can be written as $\Tilde{V}_n (\lambda) = n^{-2} \sum_{j,k=1}^n \Tilde{h}(X_j, X_k; \lambda)$, where
\begin{align}\label{h_tilda}
    \Tilde{h} \left((x_1, r_1), (x_2, r_2); \lambda \right) = h \left( (x_1 - \mu) \odot r_1 + \mu, (x_2 - \mu) \odot r_2 + \mu ; \lambda \right)
\end{align}
and, similarly
\begin{align*}
    \Tilde{g} \left( (x,r), t; \lambda \right) = g \left( (x - \mu) \odot r + \mu , t ; \lambda \right).
\end{align*}
Since integrations depend only on $t$, the relation \eqref{h_preko_g} holds for $\Tilde{h}$ and $\Tilde{g}$. To rely on the known results for asymptotics of weakly-degenerate \textit{V}-statistics with estimated parameters, one needs to verify that conditions of De Wet and Randles (\cite{deWetRandles1987}) hold. 

It is readily seen that
\begin{align}
    \Tilde{g} &\big( (x,r), t ; \lambda \big) \nonumber \\
    &= \cos \big(  t' \Sigma^{-1/2} ((x - \mu) \odot r)\big) + \sin \big(  t' \Sigma^{-1/2} ((x - \mu) \odot r)\big) - \exp \big( - \frac{1}{2} \| t \|^2  \big)  \nonumber \\
    &= \cos \big(  t' \Sigma^{-1/2} (x \odot r)\big) \cos \big( t' \Sigma^{-1/2} (\mu \odot r) \big) + \sin \big(  t' \Sigma^{-1/2} (x \odot r)\big) \sin \big( t' \Sigma^{-1/2} (\mu \odot r) \big) \nonumber \\
    &\quad \quad \quad + \sin \big(  t' \Sigma^{-1/2} (x \odot r)\big) \cos \big( t' \Sigma^{-1/2} (\mu \odot r) \big) - \cos \big(  t' \Sigma^{-1/2} (x \odot r)\big) \sin \big( t' \Sigma^{-1/2} (\mu \odot r) \big) \nonumber \\
    & \quad \quad \quad - \exp \big( - \frac{1}{2} \| t \|^2  \big). \label{g_tilda_raspisano}
\end{align}
If we go back to the kernel $h$ as in \eqref{h} and substitute every $X_j$ with an (sample) estimate, an looking at $\Tilde{\Sigma}$ as in \eqref{sigma_tilda}, we can easily see that $\Tilde{V}_n(\Tilde{\lambda}_n)$ is invariant with respect to translations, i.e. does not depend on the expected value of the data. Unfortunately, it does depend on the covariance matrix of the data, which will be seen in more detail later on. From this point onwards, it will be assumed that data are IID sampled from the $d$-variate normal distribution $\mathcal{N}_d (0, \Delta)$, for some positive definite covariance matrix $\Delta$. 

\begin{rem}\label{remark_odot}
    Note that for the column vectors $a,b,c$ from $\mathbb{R}^d$ it holds that $a' (b \odot c) = (a' \odot c') b$.
\end{rem}

Following the idea from Baringhaus and Henze \cite{BaringhausHenze1988}, let us focus on the first term of the first summand of \eqref{g_tilda_raspisano}, that is $\cos \left(  t' \Sigma^{-1/2} (x \odot r)\right)$ (the other term is constant with respect to $x$). Assuming for a moment that $R$ is a constant vector equal $r$, one is able, relying on the Remark \ref{remark_odot}, to calculate that
\begin{align*}
    \E \Big(  \cos \big(  t' \Sigma^{-1/2} (X \odot R)\big) \, \Big| \, R = r \Big) &= \int_{\mathbb{R}^d} \cos \big(  t' \Sigma^{-1/2} (x \odot r)\big) f_{\mathcal{N}_d (0, \Delta)} (x) \, \mathrm{d} x \\
    &= \int_{\mathbb{R}^d} \cos \big(  \big( (t' \Sigma^{-1/2}) \odot r' \big) x\big) f_{\mathcal{N}_d (0, \Delta)} (x) \, \mathrm{d} x,
\end{align*}
which is exactly the characteristic function of the $\mathcal{N}_d (0, \Delta)$ distribution, calculated at the point $(t' \Sigma^{-1/2}) \odot r'$, so we obtain
\begin{align*}
    \E \Big(  \cos \big(  t' \Sigma^{-1/2} (X \odot R)\big) \, \Big| \, R = r \Big) &= \exp \Big(  -\frac12 \big( (t' \Sigma^{-1/2}) \odot r' \big) \Delta \big( (t' \Sigma^{-1/2}) \odot r' \big)' \Big).
\end{align*}
Conducting similar calculations on the other terms in \eqref{g_tilda_raspisano}, one can obtain that
\begin{align*}
    \E \Big( \Tilde{g} \big(  (X, R), t ; \lambda \big) \, \Big| \, R = r  \Big) &= \Big( \cos \big( t' \Sigma^{-1/2} (\mu \odot r) \big) - \sin \big( t' \Sigma^{-1/2} (\mu \odot r) \big) \Big)  \\
    &   \quad \cdot \exp \Big(  -\frac12 \big( (t' \Sigma^{-1/2}) \odot r' \big) \Delta \big( (t' \Sigma^{-1/2}) \odot r' \big)' \Big) - \exp \Big( -\frac12 \| t \|^2 \Big).
\end{align*}
The next step would be to obtain the expected value with respect to $R$ of the above expression, which is a finite sum over all possible values of $d$-tuples of zeros and ones. 

To be able to use conditions of De Wet and Randles, as we did proving Theorem \ref{teorema_cc}, one needs to verify conditions 2.9 - 2.11 from \cite{deWetRandles1987}. The key step is to calculate the expected value
\begin{align*}
    \Tilde{\epsilon} (t, \lambda) = \E \left( \Tilde{g} \left( (X,R), t ; \lambda \right)  \right),
\end{align*}
assuming the underlying  $\mathcal{N}_d (0, \Delta)$ distribution, and then find the vector of its partial derivatives, calculated at true value of parameters (here $\lambda = (0, \Delta)$), denoted by $\Tilde{\epsilon}_1 (t;  \lambda)$. This would then be used to find the kernel similar to \eqref{h_hat_star}, and to determine the asymptotic distribution using its eigenvalues.

However, further calculations would be of no great help, since this kernel, and subsequently its operator eigenvalues, depend on the unknown distribution parameters, and it is not very likely that one could be able to obtain them analytically. 
In the complete-data case, the null distribution can be simulated by sampling from the multivariate standard normal distribution, and empirical quantiles can be used for calculating the critical values. This is all due to original statistic being affine invariant, and consequently distribution-free. Our statistic is neither, so the reasonable approach is to use a bootstrap algorithm to approximate p-value. Here we suggest the usage of Algorithm \ref{bootstrap_algorithm}.

%


\begin{rem}
    We note that the Algorithm \ref{bootstrap_algorithm} is modifications of a bootstrap algorithm from \cite{jimenez2003bootstrapping}, that is designed to work with a complete sample. One of the goals of the empirical study that follows is to examine its properties under various methods of imputation.
\end{rem}

\begin{algorithm}
  \caption{A bootstrap algorithm for MCAR data}\label{bootstrap_algorithm}
  \begin{algorithmic}[1]
    \State Start with the incomplete sample $\mathbf{x}=(x_1, \dots, x_{n_1})$;
    \State Estimate covariance matrix $\Sigma$ by $\Tilde{\Sigma}$, calculated on the complete cases; estimate mean vector $\mu$ by $\Tilde{\mu}$ on the dataset that is imputed by a chosen method; 
    \State Calculate estimate $\Tilde{p}$ of the vector of by-column missingness probabilities using response indicator averages;
    \State Generate bootstrap sample $\mathbf{x^*}=(x^*_1, \dots, x^*_{n_1})$  from $\mathcal{N} (\Tilde{\mu}, \Tilde{\Sigma})$;
    \State Generate missingness in $\mathbf{x^*}$ according to MCAR and probabilities $\Tilde{p}$ and impute the sample using chosen method to obtain the imputed sample $\mathbf{x^*}_{\mathrm{imp}}$;
    \State Obtain the value $T_n^* (\mathbf{x^*}_{\mathrm{imp}})$ of the BHEP statistic on the imputed dataset;
    \State Repeat steps 4-5 $B$ times to obtain sequence of bootstrapped test statistics $T_{n, 1}^*, \dots, T_{n,B}^*$;
    \State Reject the null hypothesis for the significance level $\alpha$ if $T_n(\mathbf{x})$ is greater than the $(1 - \alpha)$-quantile of the empirical bootstrap distribution of $(T_{n, 1}^*, \dots, T_{n,B}^*)$.
  \end{algorithmic}
\end{algorithm}

\section{Power study}\label{sec:simulation}

In this section, an extensive simulation study is conducted to give an answer to a crucial question that motivated our work: Is it better to impute the data, or to use complete-case approach? We observe different scenarios, varying underlying data distribution, missingness percentage, as well as imputation methods.
Here we highlight the most significant simulation results that reinforce our key points. Additional results can be found in Appendix.

{
Empirical sizes and powers are obtained  using Monte Carlo procedure and bootstrap algorithm \ref{bootstrap_algorithm} with $N=2000$ replicates and $B=1000$ bootstrap cycles. In order to make a fair comparison we apply the same procedure along different approaches, although we are aware that the usage of bootstrap in the complete-case is not necessary. Everything is done for the level of significance $\alpha=0.05$.
}

The choice of imputation methods came down to mean imputation, median imputation, as well as 3- and 6-nearest neighbor imputation. For the first two, built-in functions from the \texttt{R} \texttt{missMethods} (\cite{rockel2022missmethods}) package are used, and for the $kNN$ we use \texttt{knn.impute} function from the package \texttt{bnstruct} (\cite{franzin2017bnstruct}). 


MCAR data are created using the \texttt{delete\_MCAR} function from the \texttt{missMethods} package. To examine properties of each method, with the increase of sample size,  sample sizes of 30, 60, 90 and 120 are considered. As alternatives to the null hypothesis, Student's $t$ distributions are used, with 5, 7 and 11 degrees of freedom. Also, data of several different dimensions are generated, and the covariance/scale matrix is also varied.

Before we present results of our study, we point to one most common misuse of BHEP test under the missing data, which consists of imputing the dataset, but proceeding with the data analysis procedure that was designed only for complete samples. From Figure \ref{fig:2d_stdnorm_oldfashion}, it is clear that, in the context of testing MVN, this is not a feasible approach, since the Type I error is severely distorted.

\begin{figure}[ht]
    \centering
    \includegraphics[width=\textwidth]{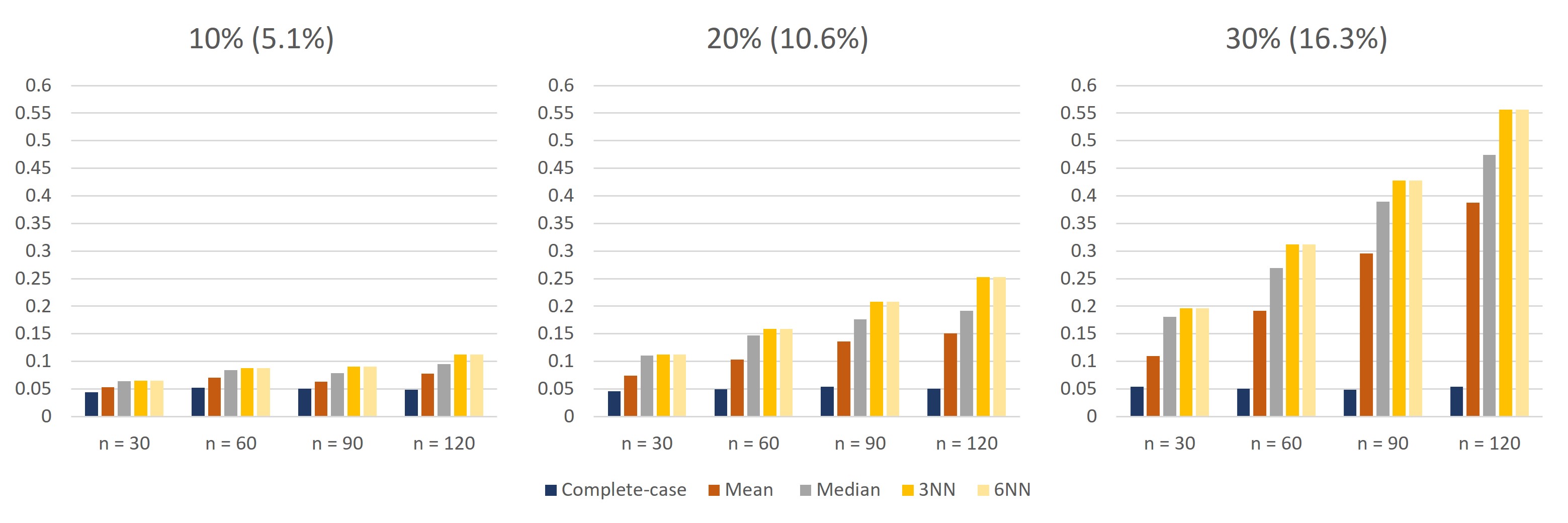}
    \caption{Empirical test sizes for underlying 2D standard normal distribution and MCAR data, ignoring that the data were imputed (First percentage = probability that a row is incomplete, second percentage = probability that a value is missing)}
    \label{fig:2d_stdnorm_oldfashion}
\end{figure}

\begin{figure}[ht]
    \centering
    \includegraphics[width=\textwidth]{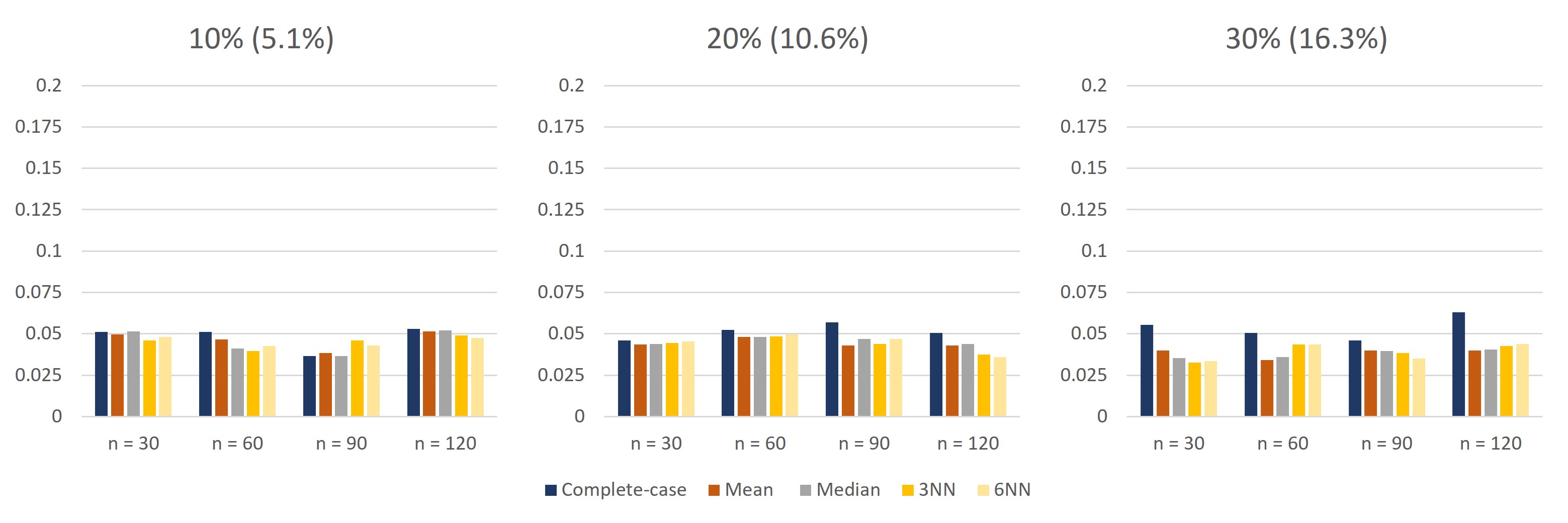}
    \caption{Empirical test sizes for underlying 2D standard normal distribution and MCAR data. }
    \label{fig:2d_stdnorm}
\end{figure}

As one can see from the Figure \ref{fig:2d_stdnorm}, for the MCAR data and bivariate standard normal distribution, complete-case approach presents itself as the best in terms of Type I error preservation, and is, especially for moderate missingness, followed closely by the other methods. For 3D case, however, as seen in the Figure \ref{fig:3d_stdnorm}, $kNN$ methods appear to be slightly more conservative and have empirical size further from the desired level $0.05$. This becomes emphasized as missingness probability starts to grow. However, in most of the real-world scenarios, where missingness is moderate, we can expect all of the methods to remain well-calibrated. Similar conclusions can be drawn for different correlation structures, as can be seen from the figures in the Appendix. 

\begin{figure}[ht]
    \centering
    \includegraphics[width=\textwidth]{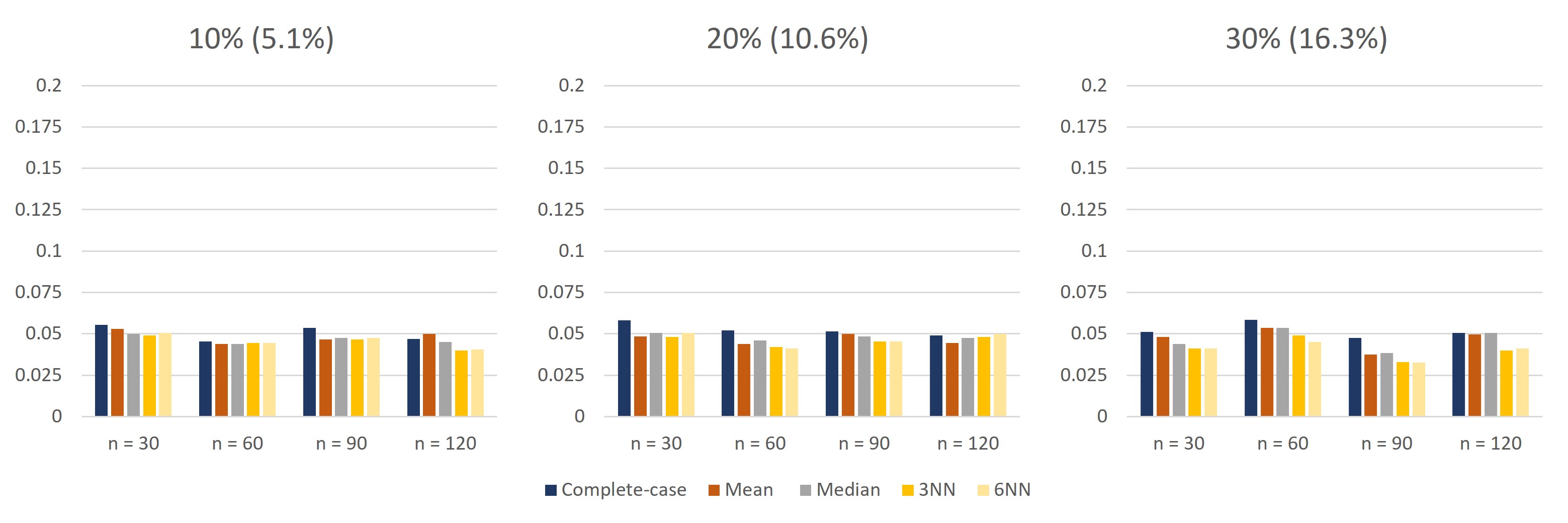}
    \caption{Empirical test sizes for underlying 3D standard normal distribution and MCAR data}
    \label{fig:3d_stdnorm}
\end{figure}

\begin{figure}[ht]
    \centering
    \includegraphics[width=\textwidth]{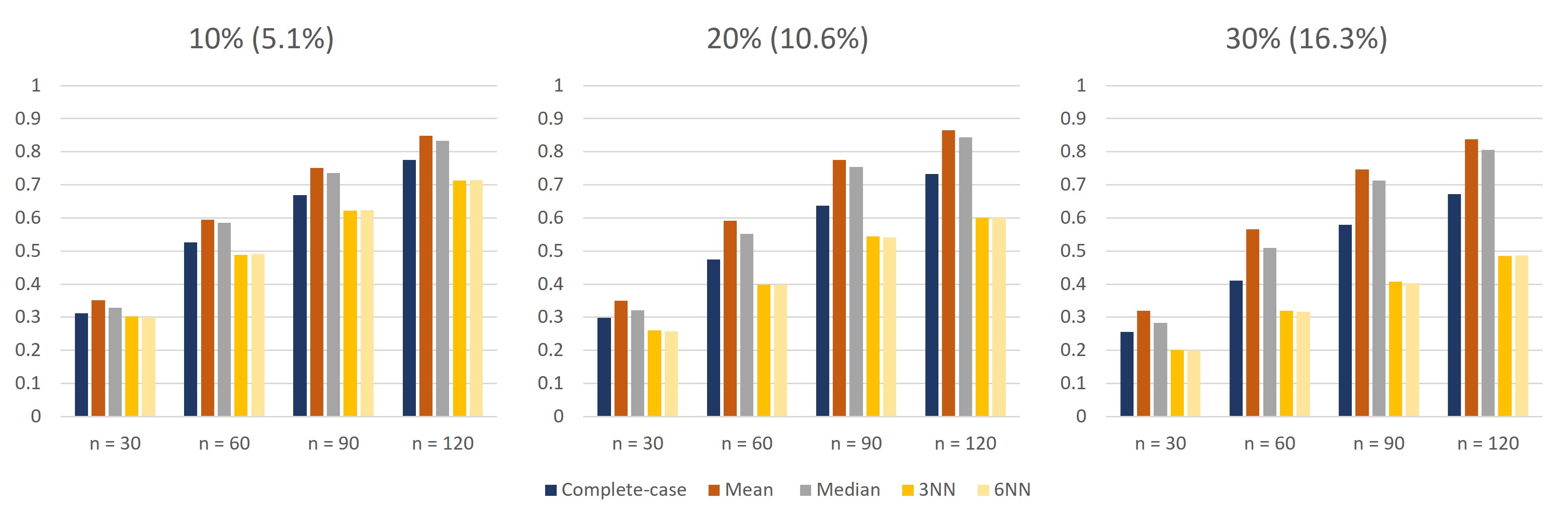}
    \caption{Powers for underlying 2D $t_5$ distribution and MCAR data}
    \label{fig:2d_t5}
\end{figure}

This being said, we shift our focus on the power comparison. As can be seen from the Figure \ref{fig:2d_t5}, mean value imputation provides the greatest empirical powers, followed closely by median imputation. The $kNN$ approaches significantly lack behind, while complete-case performs somewhere in the middle. One needs to point out that the advantage of mean imputation approach is even more significant, if we remember that Empirical test sizes have shown that complete-case approach have a higher tendency to reject the null hypothesis. The same relations uphold for all of the other $t$-distributions, both standard and scaled, as well for all of the observed dimensions. The results can be found in the Supplementary material.

To summarize, the best way to perform a normality testing is to use the Algorithm \ref{bootstrap_algorithm} with mean value imputation, in the case where one is able to conclude that their data are MCAR. This can be done using, for example, well-known test developed by Little (see \cite{Little1988}), or some of the recently developed tests by Bordino and Berrett (\cite{bordino2024tests}), Berrett and Samworth (\cite{berrett2023optimal}), or Aleksić (\cite{aleksic2023novel}).

\section{Concluding remarks and further outlook}

This paper has several main contributions. First, we have proved that complete-case analysis can be used for testing MVN under the MCAR data since the test statistic in that case has the same asymptotic distribution as the original one, calculated on the complete sample. We have provided insight into the limiting distribution of test statistics under the imputation procedure and demonstrated that in this scenario, the affine invariance property is not preserved. To overcome this, we have proposed a bootstrap algorithm for testing MVN that preserves the Type I error of the test.  
Next,  we have demonstrated the flaw of the commonly used practice of treating the imputed dataset as if it were complete and proceeding with an analysis ignoring the imputation.
Finally, we compare the powers of the test under complete-case and mean, median and $kNN$ imputation.

One of the natural extensions of this work is to extend the study to other commonly used multivariate tests for normality, and to go further by exploring the behavior of recent goodness-of-fit tests for other multivariate distributions (e.g. \cite{karling2023goodness}, \cite{ebner2024unified}).
Another direction would be to explore tests' properties under MAR settings. Preliminary results indicate that all considered approaches significantly influence the distribution of the tests and, consequently, the testing process. Developing a bootstrap algorithm that addresses this scenario remains an open question.





\section*{Disclosure statement}

The authors have no conflict of interests to disclose.

\section*{Funding}

The work of D. Aleksić is supported by the Ministry of Science, Technological Development and Innovations of the Republic of Serbia (the contract 451-03-66/2024-03/200151). The work of B. Milošević is supported by the Ministry of Science, Technological Development and Innovations of the Republic of Serbia (the contract 451-03-66/2024-03/200104), and by the COST action CA21163 - Text, functional and other high-dimensional data in econometrics: New models, methods, applications (HiTEc).

\section*{Notes on contributors}

D. Aleksić and B. Milošević have contributed equally to this work.

\appendix

\section{Additional simulation results}

Let 
\[ \Sigma_1 = \begin{bmatrix}
    1 & 0.5 \\
    0.5 & 1
\end{bmatrix}, \quad    \Sigma_2 = \begin{bmatrix}
    1 & 0.5 & 0.5 \\
    0.5 & 1 & 0.5 \\
    0.5 & 0.5 & 1
\end{bmatrix} \]



\begin{figure}[ht]
    \centering
    \includegraphics[width=\textwidth]{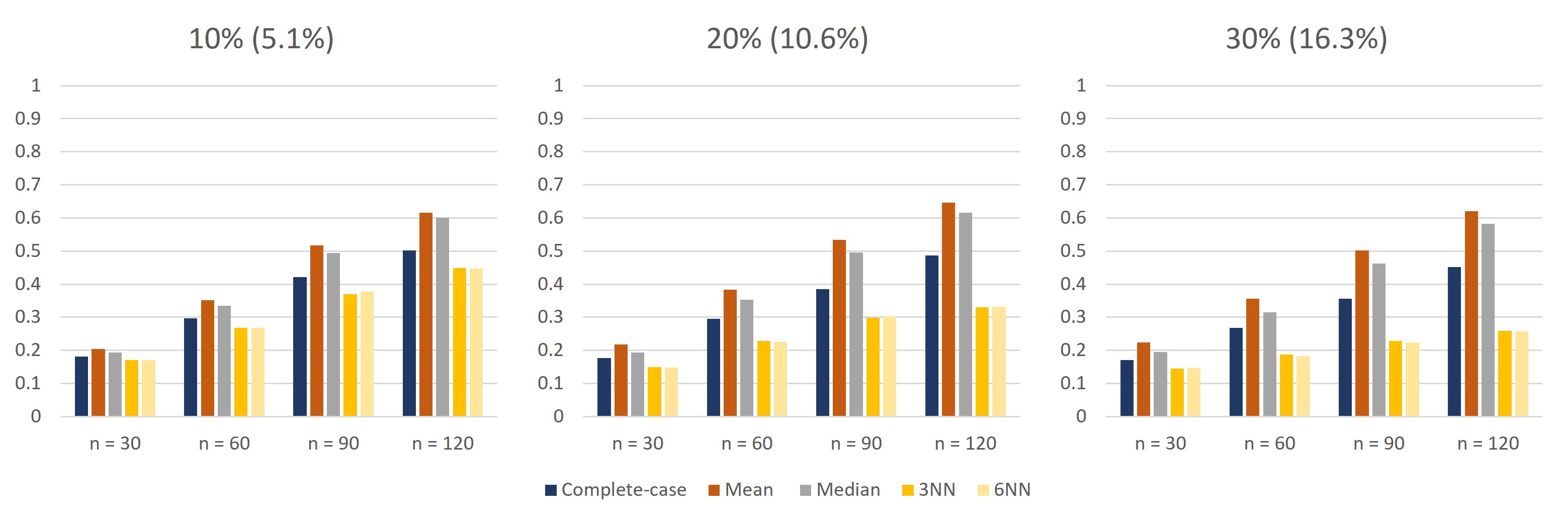}
    \caption{Powers for underlying 2D $t_7$ distribution and MCAR data}
    \label{fig:2d_t7}
\end{figure}

\begin{figure}[ht]
    \centering
    \includegraphics[width=\textwidth]{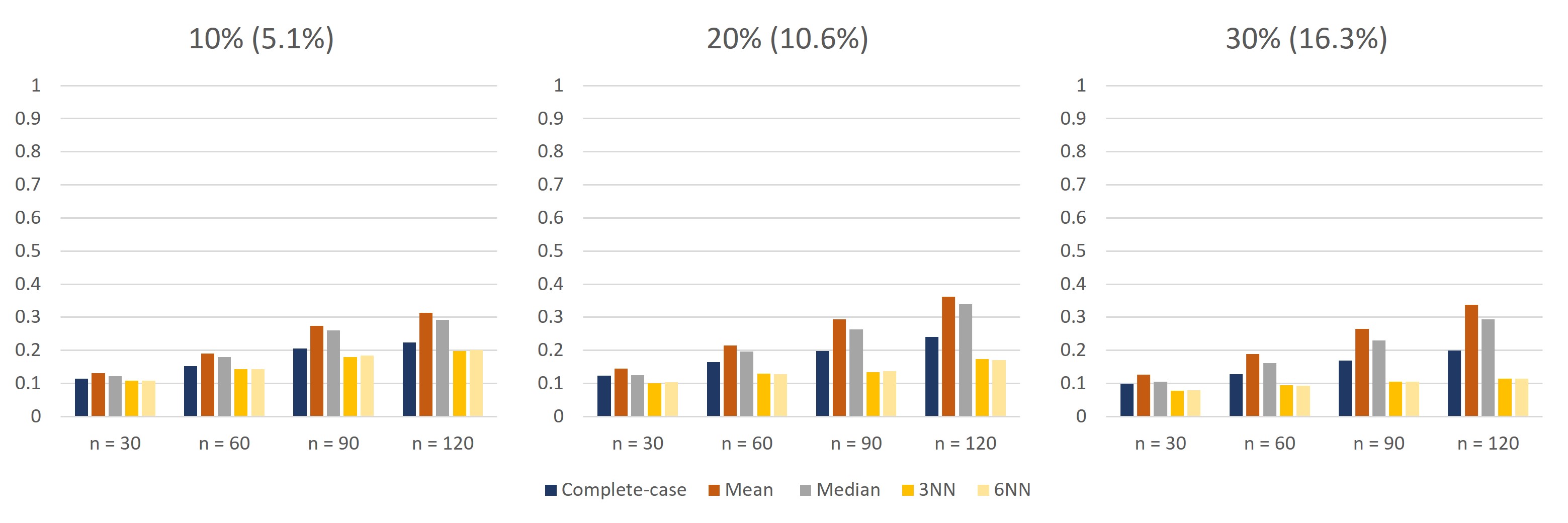}
    \caption{Powers for underlying 2D $t_{11}$ distribution and MCAR data}
    \label{fig:2d_t11}
\end{figure}

\begin{figure}[ht]
    \centering
    \includegraphics[width=\textwidth]{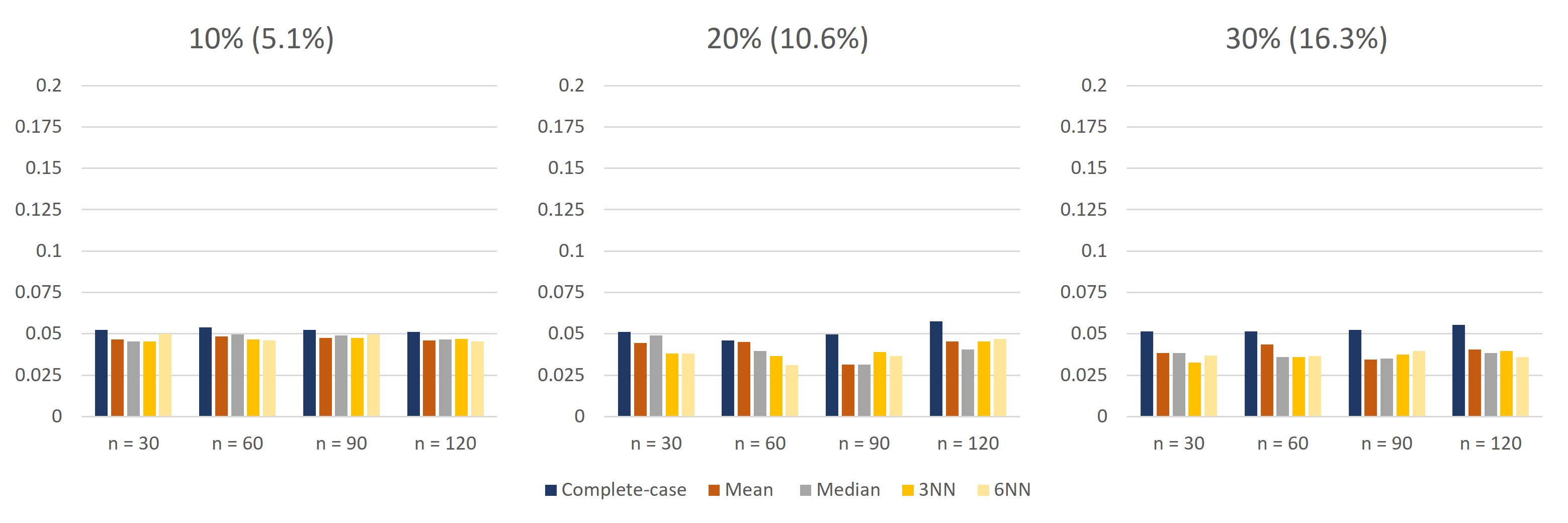}
    \caption{Empirical test sizes for underlying 2D normal distribution with covariance matrix $\Sigma_1$ and MCAR data}
    \label{fig:2d_norm_05}
\end{figure}

\begin{figure}[ht]
    \centering
    \includegraphics[width=\textwidth]{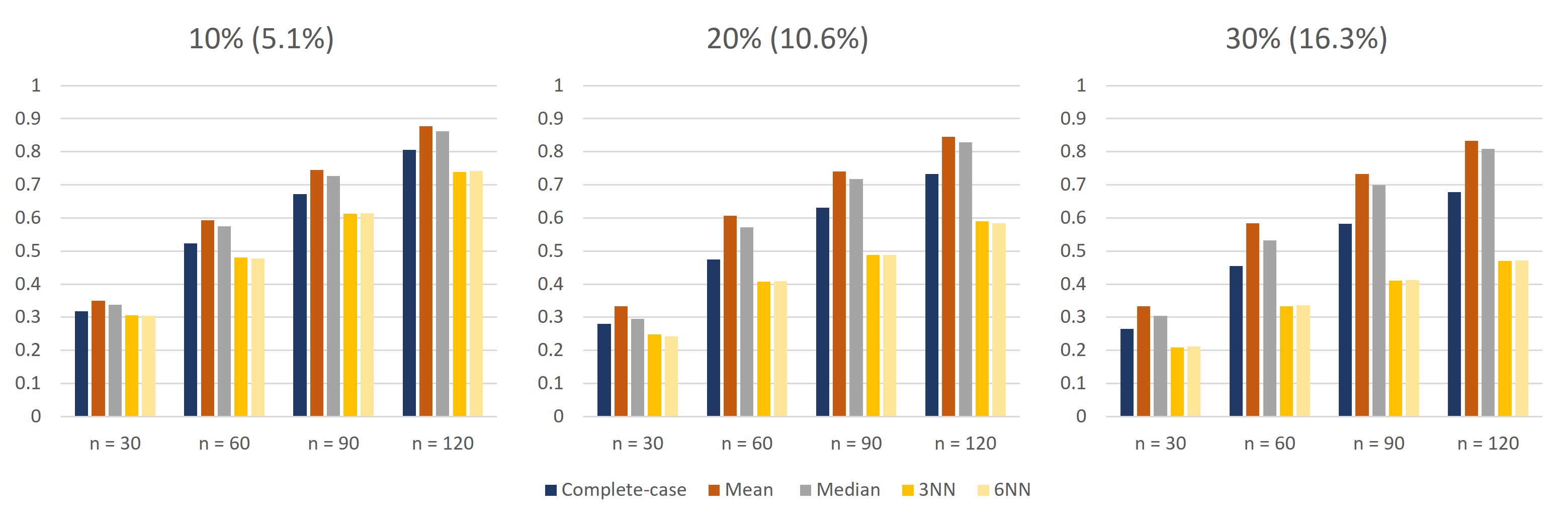}
    \caption{Powers for underlying 2D $t_5$ distribution with scale matrix $\Sigma_1$ and MCAR data}
    \label{fig:2d_t5_05}
\end{figure}

\begin{figure}[ht]
    \centering
    \includegraphics[width=\textwidth]{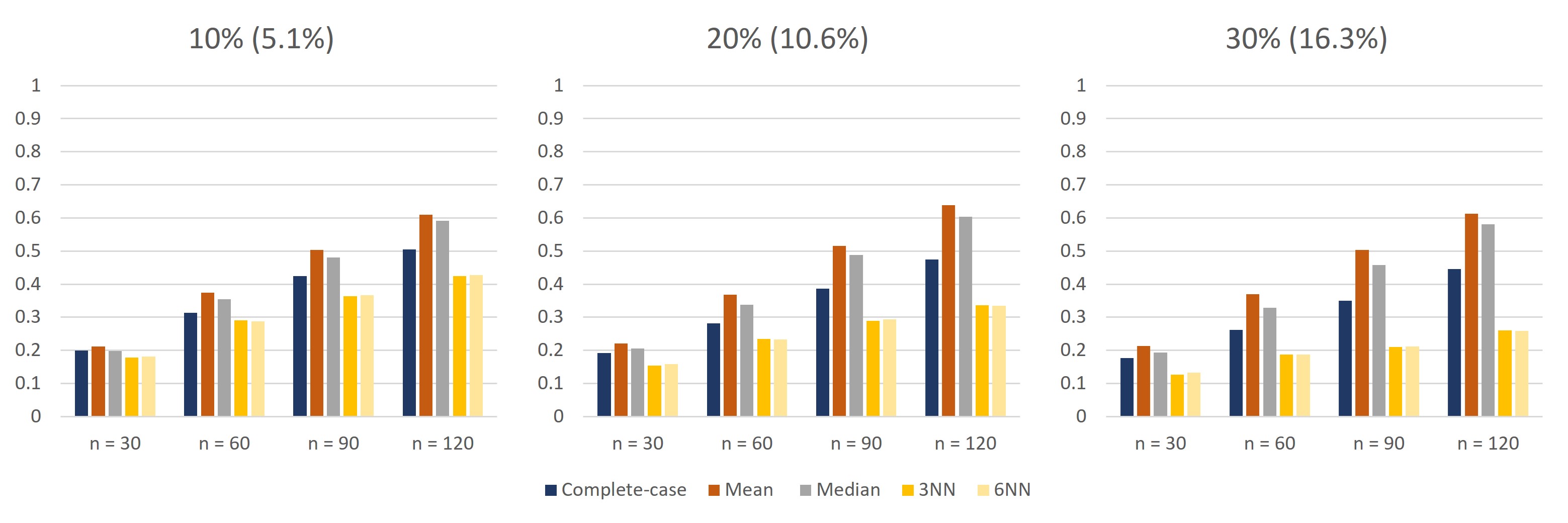}
    \caption{Powers for underlying 2D $t_7$ distribution with scale matrix $\Sigma_1$ and MCAR data}
    \label{fig:2d_t7_05}
\end{figure}

\begin{figure}[ht]
    \centering
    \includegraphics[width=\textwidth]{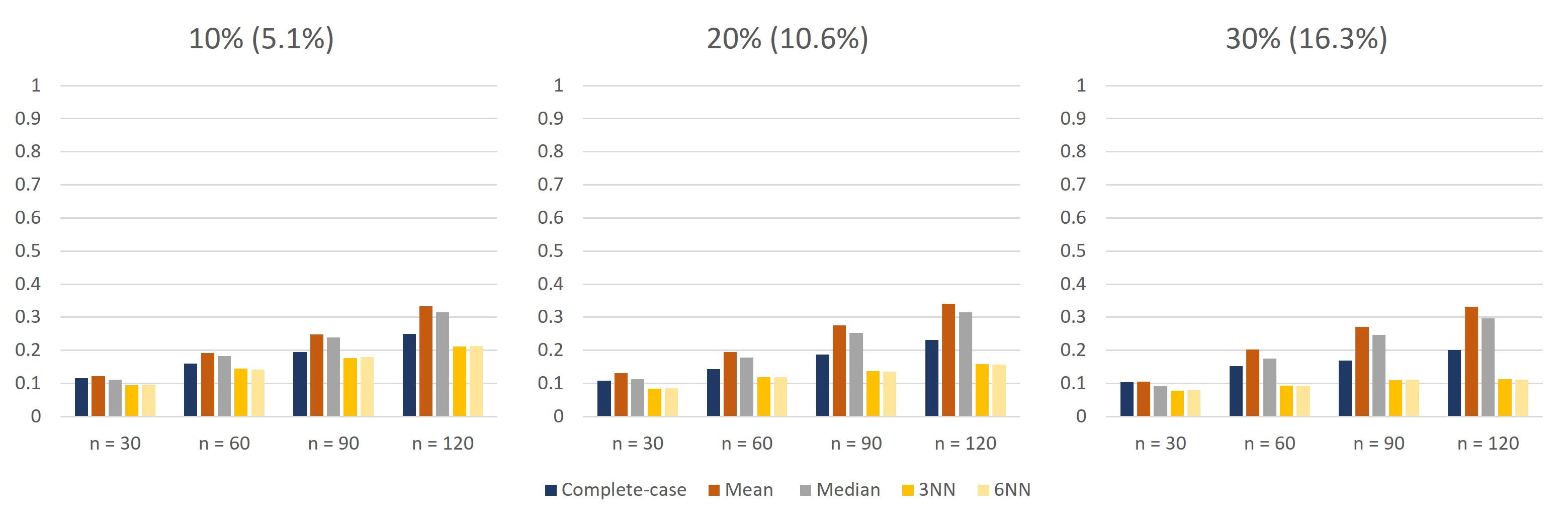}
    \caption{Powers for underlying 2D $t_{11}$ distribution with scale matrix $\Sigma_1$ and MCAR data}
    \label{fig:2d_t11_05}
\end{figure}



\begin{figure}[ht]
    \centering
    \includegraphics[width=\textwidth]{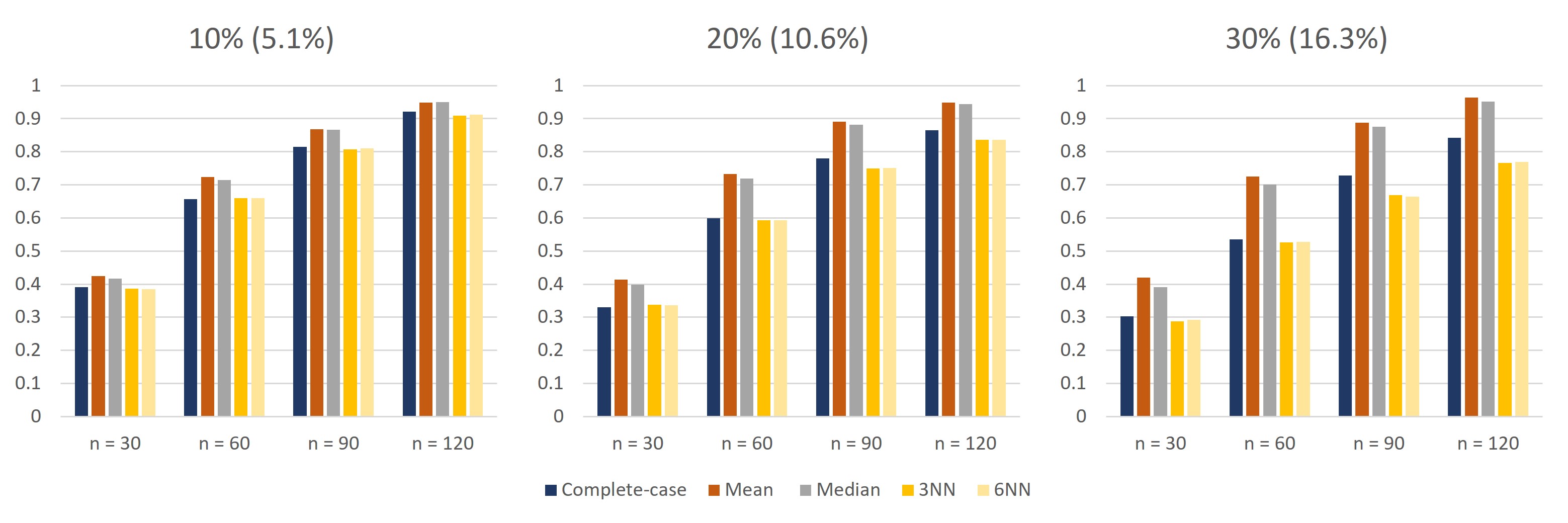}
    \caption{Powers for underlying 3D $t_5$ distribution and MCAR data}
    \label{fig:3d_t5}
\end{figure}

\begin{figure}[ht]
    \centering
    \includegraphics[width=\textwidth]{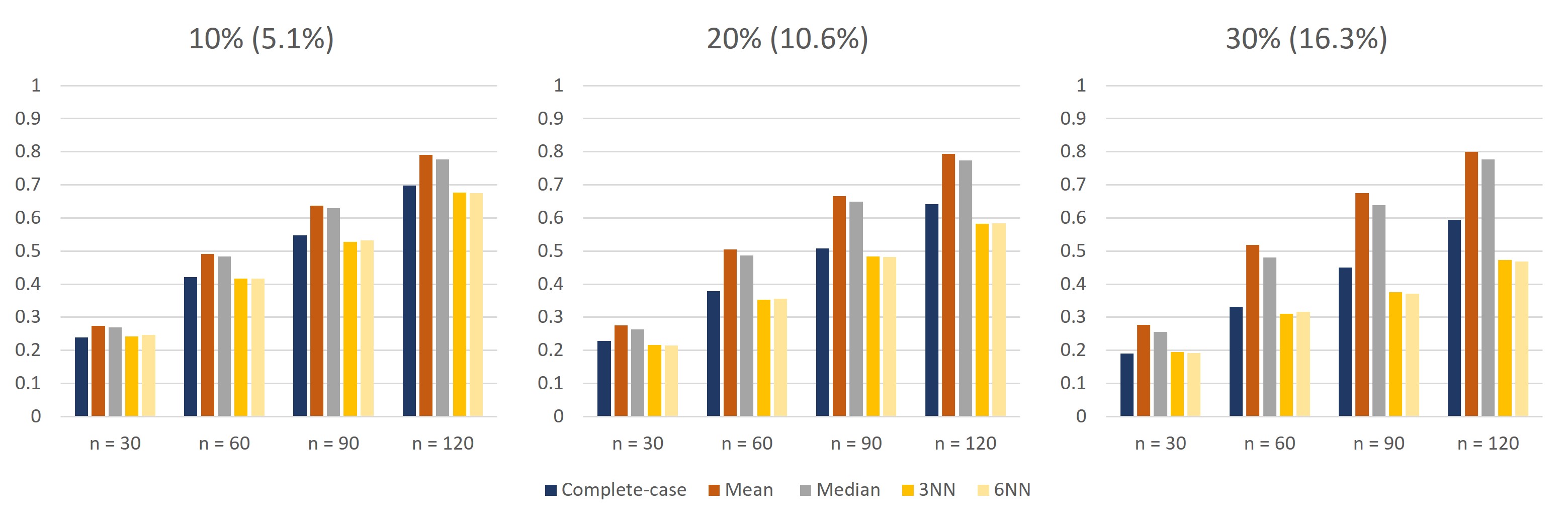}
    \caption{Powers for underlying 3D $t_7$ distribution and MCAR data}
    \label{fig:3d_t7}
\end{figure}

\begin{figure}[ht]
    \centering
    \includegraphics[width=\textwidth]{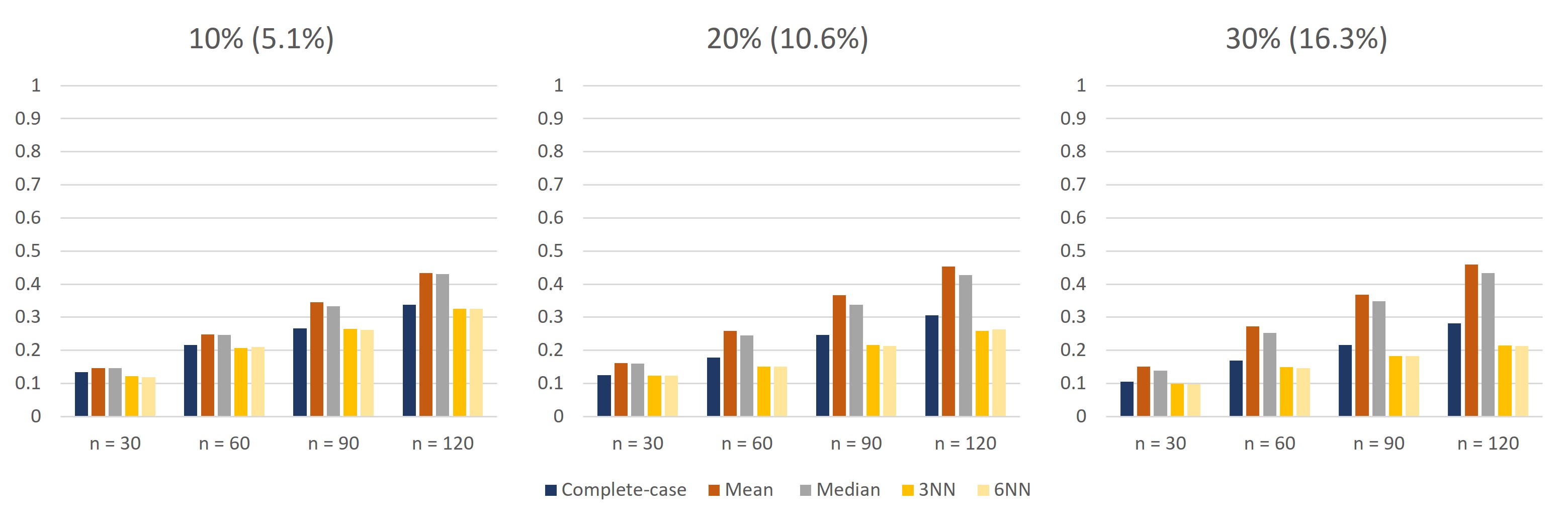}
    \caption{Powers for underlying 3D $t_{11}$ distribution and MCAR data}
    \label{fig:3d_t11}
\end{figure}

\begin{figure}[ht]
    \centering
    \includegraphics[width=\textwidth]{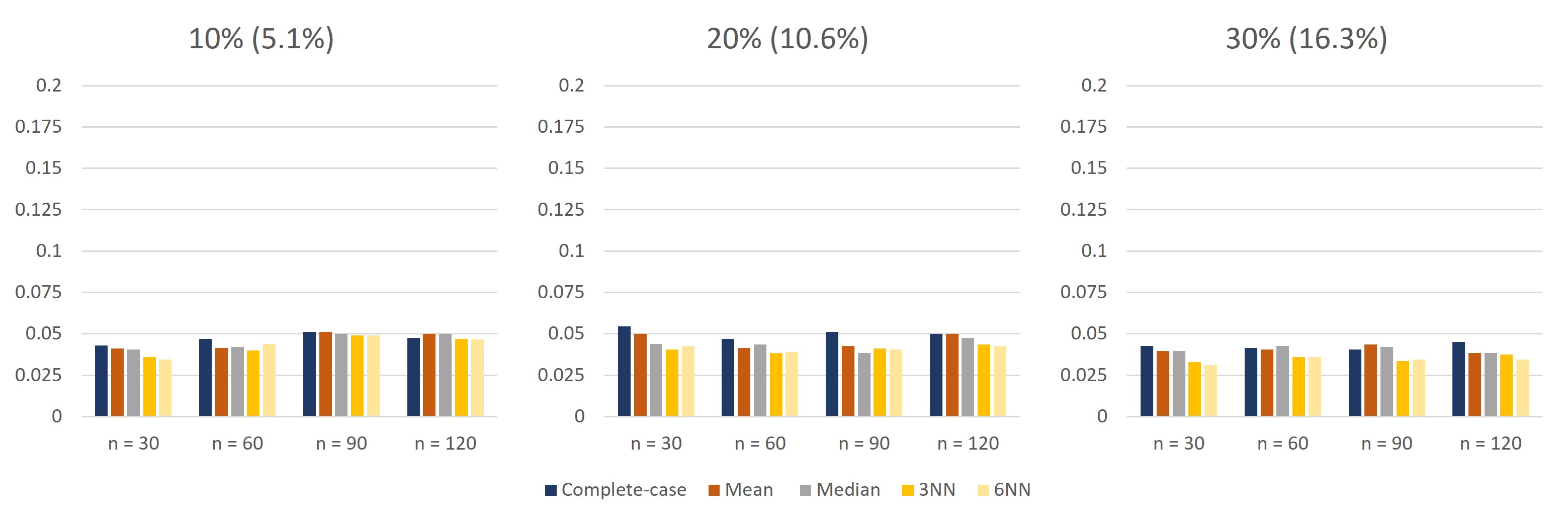}
    \caption{Empirical test sizes for underlying 3D normal distribution with covariance matrix $\Sigma_2$ and MCAR data}
    \label{fig:3d_norm_sgm2}
\end{figure}

\begin{figure}[ht]
    \centering
    \includegraphics[width=\textwidth]{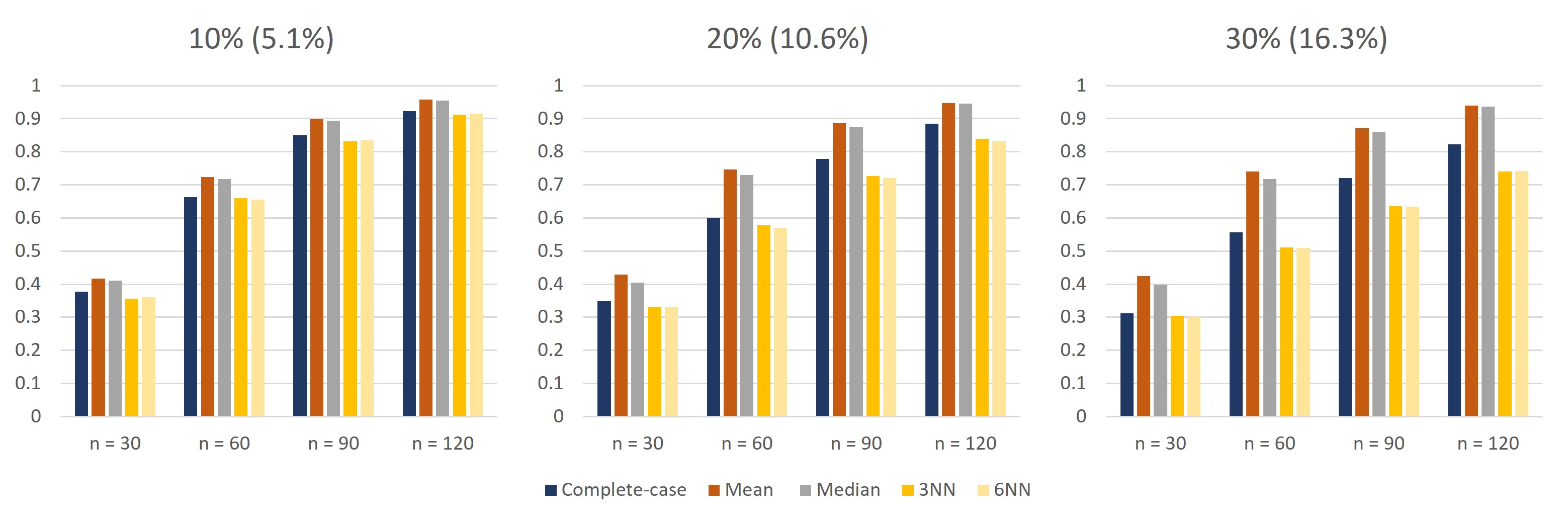}
    \caption{Powers for underlying 3D $t_5$ distribution with scale matrix $\Sigma_2$ and MCAR data}
    \label{fig:3d_t5_sgm2}
\end{figure}

\begin{figure}[ht]
    \centering
    \includegraphics[width=\textwidth]{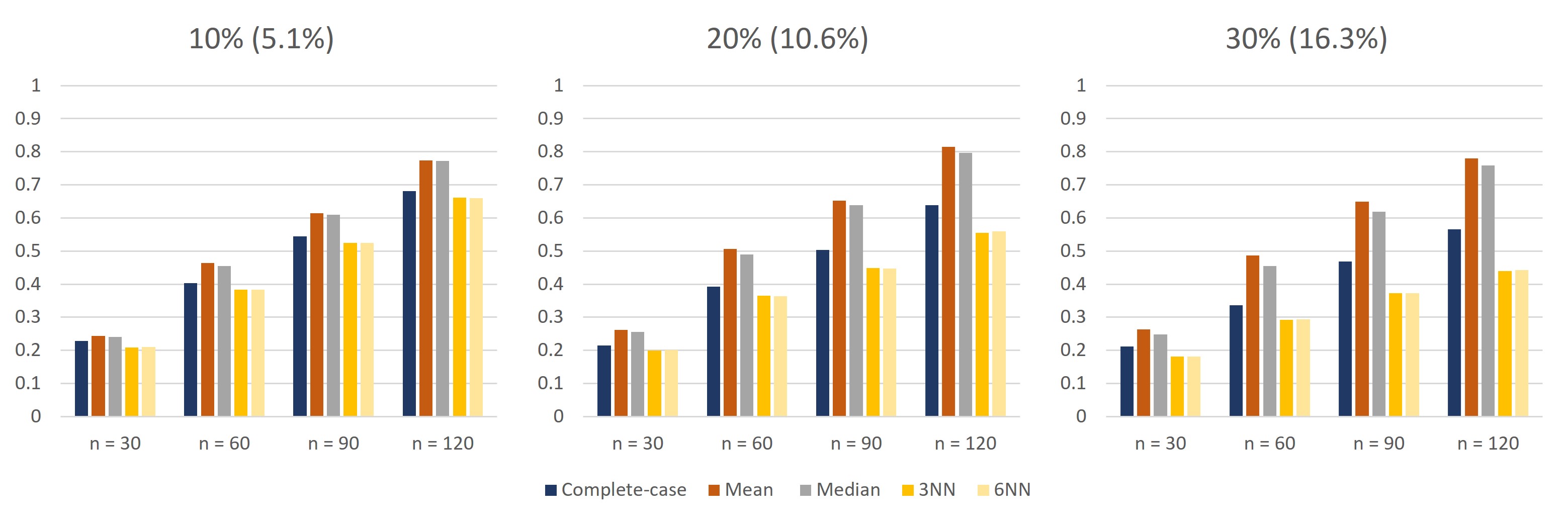}
    \caption{Powers for underlying 3D $t_7$ distribution with scale matrix $\Sigma_2$ and MCAR data}
    \label{fig:3d_t7_sgm2}
\end{figure}

\begin{figure}[ht]
    \centering
    \includegraphics[width=\textwidth]{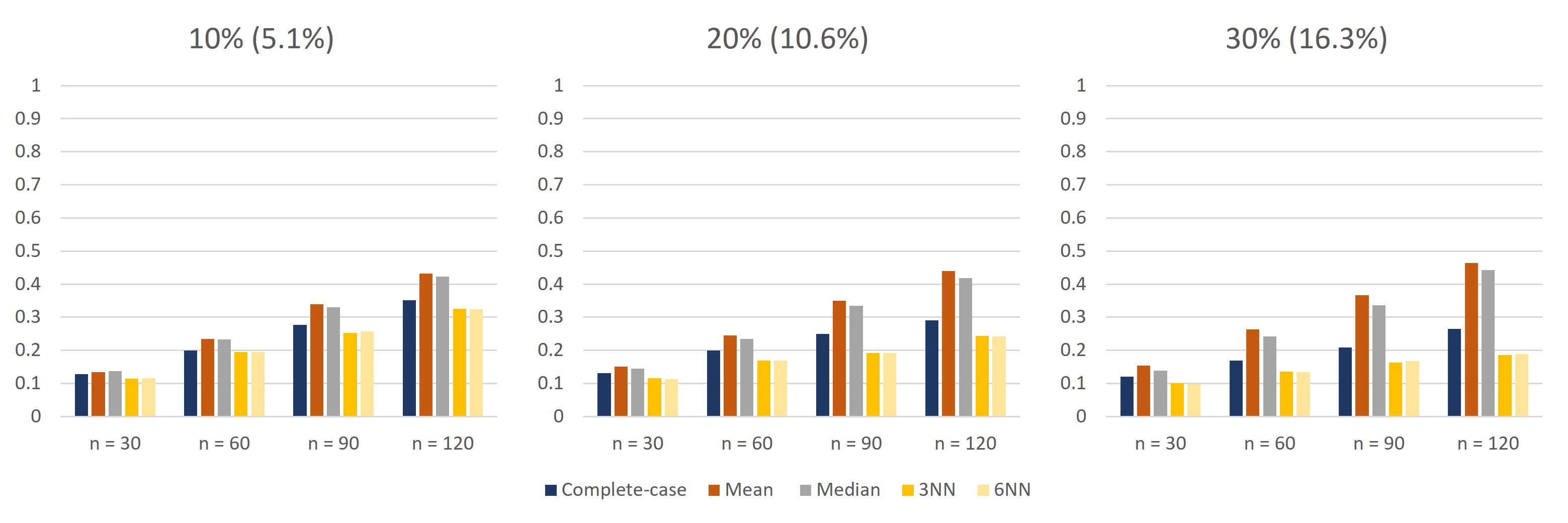}
    \caption{Powers for underlying 3D $t_{11}$ distribution with scale matrix $\Sigma_2$ and MCAR data}
    \label{fig:3d_t11_sgm2}
\end{figure}

\end{document}